\newtheorem{thm}{Theorem}[section]
\newtheorem{cor}[thm]{Corollary}
\newtheorem{lem}[thm]{Lemma}
\newtheorem{defn}[thm]{Definition}
\newtheorem{rmk}[thm]{Remark}
\DeclareMathOperator*{\supp}{supp}
\DeclareMathOperator*{\argmin}{arg\,min}
\begin{document}
%
% paper title
% can use linebreaks \\ within to get better formatting as desired
% Do not put math or special symbols in the title.
% \title{Orthogonal Greedy Algorithm with Thresholding in Compressed
% Sensing}
\title{New Coherence and RIP Analysis for Weak Orthogonal Matching Pursuit}
%
%
% author names and IEEE memberships
% note positions of commas and nonbreaking spaces ( ~ ) LaTeX will not break
% a structure at a ~ so this keeps an author's name from being broken across
% two lines.
% use \thanks{} to gain access to the first footnote area
% a separate \thanks must be used for each paragraph as LaTeX2e's \thanks
% was not built to handle multiple paragraphs
%

\author{Mingrui~Yang,~\IEEEmembership{Member,~IEEE,}
        and~Frank~de~Hoog
        % and~Vladimir~Temlyakov
        % <-this % stops a space
\thanks{Mingrui Yang is with CSIRO Computational Informatics, Pullenvale, QLD 4069 Australia (email: mingrui.yang@csiro.au).}% <-this % stops a space
\thanks{Frank de Hoog is with CSIRO Computational Informatics, Acton, ACT 2061 Australia (email: frank.dehoog@csiro.au).}
% % <-this % stops a space
% \thanks{Manuscript received mmm dd, yyyy; revised mmm dd, yyyy.}
}

\maketitle

% As a general rule, do not put math, special symbols or citations
% in the abstract or keywords.
\begin{abstract}
In this paper we define a new coherence index, named the \emph{global 2-coherence}, of a given
dictionary and study its relationship with the traditional mutual
coherence and the restricted isometry constant. By exploring this
relationship, we obtain more general results on sparse signal reconstruction using
greedy algorithms in the compressive sensing (CS) framework. In
particular, we obtain an improved bound over the best known results
on the restricted isometry constant for
successful recovery of sparse signals using orthogonal matching pursuit (OMP).
\end{abstract}

% Note that keywords are not normally used for peerreview papers.
\begin{IEEEkeywords}
Compressive sensing, mutual coherence, global 2-coherence, restricted isometry
property, weak orthogonal matching pursuit (WOMP), orthogonal matching pursuit (OMP)
\end{IEEEkeywords}

% For peer review papers, you can put extra information on the cover
% page as needed:
% \ifCLASSOPTIONpeerreview
% \begin{center} \bfseries EDICS Category: 3-BBND \end{center}
% \fi
%
% For peerreview papers, this IEEEtran command inserts a page break and
% creates the second title. It will be ignored for other modes.
\IEEEpeerreviewmaketitle

\section{Introduction}
\label{sec:intro}
% % The very first letter is a 2 line initial drop letter followed
% % by the rest of the first word in caps.
% % 
% % form to use if the first word consists of a single letter:
% % \IEEEPARstart{A}{demo} file is ....
% % 
% % form to use if you need the single drop letter followed by
% % normal text (unknown if ever used by IEEE):
% % \IEEEPARstart{A}{}demo file is ....
% % 
% % Some journals put the first two words in caps:
% % \IEEEPARstart{T}{his demo} file is ....
% % 
% % Here we have the typical use of a "T" for an initial drop letter
% % and "HIS" in caps to complete the first word.
% \IEEEPARstart{T}{his} demo file is intended to serve as a ``starter file''
% for IEEE journal papers produced under \LaTeX\ using
% IEEEtran.cls version 1.8 and later.
% % You must have at least 2 lines in the paragraph with the drop letter
% % (should never be an issue)
% I wish you the best of success.

% \hfill mds
 
% \hfill December 27, 2012

Compressive sensing (CS) \cite{CandesRombergTao:06, Do} is a newly developed and fast growing field
of research. It provides a new sampling scheme that breaks the
traditional Shannon-Nyquist sampling rate
\cite{Shannon1949:samplingtheorem} given that the signal of interest
is sparse in a certain basis or tight frame. More specifically, for a vector $a \in \mathbb{R}^d$, let
$\|a\|_0$ denote the $\ell_0$ ``norm'' of $a$, which counts the number of
nonzero entries of $a$. We say $a$ is \emph{$k$-sparse} if $\|a\|_0
\le k$. CS has established conditions for finding the unique sparse solution of the
following $\ell_0$ minimization problem
\begin{align}\label{eqn:l0_minimization}
  \min_a \|a\|_0 \mbox{ subject to } f = \Phi a,
\end{align}
where $\Phi \in \mathbb{R}^{n\times d}$ ($n \ll d$) and $f\in
\mathbb{R}^n$. To ensure that the $k$-sparse solution is unique, we
need the following restricted isometry property introduced by Candes
and Tao in \cite{Candes2005:LP}.
\begin{defn}[Restricted Isometry Property (RIP)]
  A matrix $\Phi$ satisfies the \emph{restricted isometry property}
  of order $k$ with the \emph{restricted isometry constant (RIC)} $\delta_k$
  if $\delta_k \in (0,1)$ is the smallest constant such that
  \begin{align}
    (1-\delta_k)\|a\|_2^2 \le \|\Phi a\|_2^2 \le (1+\delta_k)\|a\|_2^2
  \end{align}
  holds for all $k$-sparse signal $a$.
\end{defn}

It has been shown in \cite{Candes2005:LP} that if $\delta_{2k} < 1$,
then the $\ell_0$ minimization problem \eqref{eqn:l0_minimization} has
a unique $k$-sparse solution. However, solving an $\ell_0$
minimization problem is in general NP-hard. One solution to this
problem is to relax the $\ell_0$ ``norm'' to the
$\ell_1$ norm. Candes has shown in \cite{Candes:08} if
$\delta_{2k} < \sqrt{2} - 1$, then $\ell_1$ minimization is equivalent to
$\ell_0$ minimization. % Better bounds have been developed
% \cite{Foucart2009395, Foucart201097, 5290058, 5550400, Mo2011460}, where the most
% recent result along this direction is $\delta_{2k} < 0.4931$
% \cite{Mo2011460}.
Another alternative is to use heuristic greedy algorithms
to approximate the solution of the $\ell_0$ minimization problem
% \cite{Gilbert:2003:AFO:644108.644149, DET2, 1337101, DET,
%   Needell2009301, 4839056, 5419092, Blumensath2009265, Foucart:2011:HTP:2340478.2340494}
. Orthogonal matching pursuit (OMP)
is one of the simplest and most popular algorithms of this type. For
the analysis of greedy algorithms, the metric chosen for the sensing
matrix are usually coherence indices rather than the RIC.

For simplicity, from now on, we always assume that the columns of the
matrix (dictionary) $\Phi$ are normalized such that for any column
$\phi\in\Phi$, $\|\phi\|_2 = 1$.
\begin{defn}\label{def:mutualcoherence}
  The \emph{mutual coherence} $M(\Phi)$ of a matrix $\Phi$ is defined by
  \begin{align}
    M(\Phi) := \max_{\substack{\phi_i, \phi_j \in \mathbf{\Phi} \\ i \neq
        j}} | \langle \phi_i, \phi_j  \rangle |,
  \end{align}
  where $\langle \cdot, \cdot \rangle$ represents the usual inner product.
\end{defn}
It has been shown that if $(2k-1)M < 1$, then OMP can recover every
$k$-sparse signal exactly in $k$ iterations \cite{DET}. Recently, researchers have started to investigate the
performance of OMP using RIP. Davenport and
Wakin~\cite{5550495} have proved that $\delta_{k+1} <
\frac{1}{3\sqrt{k}}$ is sufficient for OMP to recover any $k$-sparse
signal in $k$ iterations. Mo and Shen \cite{MoShen:12} improve the bound to $\delta_{k+1} <
\frac{1}{1+\sqrt{k}}$. They also give an
example that OMP fails to recover a $k$-sparse signal in $k$ steps
when $\delta_{k+1} = \frac{1}{\sqrt{k}}$. This leaves a question if
their bound can be further improved.

It is then natural to examine the
relationship between the mutual coherence $M$ and the RIC $\delta_k$,
since the bound for $M$ is already sharp. However, approaching this
directly was not fruitful and this motivated us to define a new coherence
index, namely the 
\emph{global 2-coherence}, and
establish a bridge connecting the mutual coherence, the global 2-coherence,
and the RIC. Then by using this connection, we analyze
the performance of weak orthogonal matching pursuit (WOMP), a weak
version of OMP. In particular, we extend the results given
in~\cite{yang2014} to show that $\delta_k +
\sqrt{k}\delta_{k+1} < 1$ is sufficient for OMP to recover any
$k$-sparse signal in $k$ iterations, which provides an improved bound
over the best known result given in~\cite{MoShen:12} and confirms that
it is not yet optimal. As mentioned above, the
results presented in this paper is an extension of~\cite{yang2014},
where we introduced a new algorithm to CS, called orthogonal matching
pursuit with thresholding (OMPT), and showed its reconstruction
stability and robustness.

% We also initialize a study of a thresholding type greedy algorithm named orthogonal matching
% pursuit with thresholding (OMPT) in the CS framework. OMPT is easier than OMP to
% implement. It does not require the more expensive greedy step, which
% calculates in each iteration the inner products between the residual
% and all the atoms from the dictionary. Instead, it only needs to calculate the $\ell_2$ norm of
% the residual once in each iteration. We show that by carefully
% choosing the thresholding parameter,
% OMPT maintains exactly the same reconstruction performance as OMP, for both ideal
% noiseless and noisy cases. More details are given
% in Section~\ref{sec:ompt}. 

\section{Global 2-Coherence}

We first define a new coherence index, the \emph{global 2-coherence}, $\nu_k(\Phi)$ for a given
dictionary $\Phi$. Then based on this new coherence index, we establish the
connections among the coherence indices and the RIC $\delta_k$.

\begin{defn}\label{def:newcoherence}
  Denote by $[d]$ the index set $\{1,2,\ldots,d\}$. The global 2-coherence of a dictionary $\Phi \in \mathbb{R}^{n\times d}$ is defined as
  \begin{equation}\label{eqn:newcoherence}
    \nu_k(\Phi):=
    \max_{i\in[d]}\max_{\substack{\Lambda\subseteq[d]\setminus\{i\} \\ |\Lambda| \le k}}
    \left( \sum_{j\in\Lambda} \langle \phi_i,\phi_j \rangle^2
    \right)^{1/2},
  \end{equation}
  where $\phi_i$, $\phi_j$ are columns from the dictionary $\Phi$.
\end{defn}

The global 2-coherence $\nu_k (\Phi)$ defined above is a generalization of the mutual coherence defined in
Definition~\ref{def:mutualcoherence} and the coherence indices defined
in \cite{1337101,5361489}. In particular, when $k=1$, $\nu_1$ is
exactly the mutual coherence.

% Next, we provide some simple properties of the coherence index $\nu_k$.

% \begin{pro}\label{pro:nu1}
%     $\nu_k$ is increasing in $k$.
% \end{pro}

% \begin{proof}
%   This can be easily proved by contradiction.
% \end{proof}

% \begin{pro}\label{pro:nu2}
%     $\frac{\nu_k}{\sqrt{k}}$ is decreasing in $k$.
% \end{pro}

% \begin{proof}
%   This statement can be easily proved by induction. We omit the details.
% \end{proof}

The following lemma describes the
relations among the mutual coherence $M$, the 2-coherence $\nu_k$, and
the restricted isometry constant $\delta_k$.

\begin{lem}\label{lem:relationship}
    For $k > 1$, we have
    \begin{equation}
        M \le \nu_{k-1} \le  \delta_{k} \le \sqrt{k-1}\nu_{k-1} \le (k-1)M.
    \end{equation}
\end{lem}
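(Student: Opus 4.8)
The plan is to prove the four inequalities in the chain separately, working outward from the definitions and using the Gram-matrix characterization of the RIC as the bridge to $\delta_k$. The two outermost inequalities are elementary. For the leftmost inequality $M \le \nu_{k-1}$, since $k>1$ the index $k-1$ is at least $1$, so I may take $\Lambda=\{j\}$ to be a singleton in the definition of $\nu_{k-1}$; this immediately gives $\nu_{k-1}\ge|\langle\phi_i,\phi_j\rangle|$ for every $i\ne j$, hence $\nu_{k-1}\ge M$. For the rightmost inequality $\sqrt{k-1}\,\nu_{k-1}\le(k-1)M$, I would bound each summand in the definition of $\nu_{k-1}$ by $M^2$; since $|\Lambda|\le k-1$ the sum is at most $(k-1)M^2$, so $\nu_{k-1}\le\sqrt{k-1}\,M$, and multiplying through by $\sqrt{k-1}$ yields the claim.

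The two middle inequalities both rely on the standard fact that $\delta_k=\max_{|S|\le k}\|G_S-I\|_2$, where $G_S=\Phi_S^{\top}\Phi_S$ is the Gram matrix of the columns indexed by $S$ and $\|\cdot\|_2$ denotes the spectral norm; this follows by restricting the RIP inequality to vectors supported on $S$ and observing that the eigenvalues of $G_S$ must then lie in $[1-\delta_k,\,1+\delta_k]$. Note that $G_S-I$ is the hollow (zero-diagonal) symmetric matrix whose $(i,j)$ entry is $\langle\phi_i,\phi_j\rangle$ for $i\ne j$. For $\nu_{k-1}\le\delta_k$, I would fix the index $i$ and subset $\Lambda$ attaining the maximum in the definition of $\nu_{k-1}$, set $S=\{i\}\cup\Lambda$ so that $|S|\le k$, and exploit that the spectral norm of a symmetric matrix dominates the Euclidean norm of any of its columns. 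The $i$-th column of $G_S-I$ has squared norm exactly $\sum_{j\in\Lambda}\langle\phi_i,\phi_j\rangle^2$, so $\delta_k\ge\|G_S-I\|_2\ge\left(\sum_{j\in\Lambda}\langle\phi_i,\phi_j\rangle^2\right)^{1/2}$, and taking the maximum over $i$ and $\Lambda$ delivers $\delta_k\ge\nu_{k-1}$.

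The main work is the remaining inequality $\delta_k\le\sqrt{k-1}\,\nu_{k-1}$, which I expect to be the crux. Here I would fix $S$ with $|S|=k$ attaining the maximum in $\delta_k=\|G_S-I\|_2$ and bound the spectral norm of the symmetric matrix $G_S-I$ by Gershgorin's circle theorem: because the diagonal entries vanish, every eigenvalue $\lambda$ satisfies $|\lambda|\le\max_{i\in S}\sum_{j\in S\setminus\{i\}}|\langle\phi_i,\phi_j\rangle|$. The $\ell_1$ row sum is then converted to the $\ell_2$ form appearing in $\nu_{k-1}$ via Cauchy--Schwarz: for each $i$, since $|S\setminus\{i\}|=k-1$ the set $\Lambda=S\setminus\{i\}$ is admissible in the definition of $\nu_{k-1}$, and $\sum_{j\in S\setminus\{i\}}|\langle\phi_i,\phi_j\rangle|\le\sqrt{k-1}\left(\sum_{j\in S\setminus\{i\}}\langle\phi_i,\phi_j\rangle^2\right)^{1/2}\le\sqrt{k-1}\,\nu_{k-1}$. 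Combining these gives $\delta_k\le\sqrt{k-1}\,\nu_{k-1}$. The delicate point to get right is that Gershgorin controls the eigenvalues, and hence (by symmetry) the spectral norm, rather than merely row norms, and that the factor $\sqrt{k-1}$ emerges precisely from applying Cauchy--Schwarz to the $k-1$ off-diagonal entries in each row.
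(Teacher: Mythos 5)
Your proof is correct and follows essentially the same route as the paper: both middle inequalities rest on the Gram-matrix characterization $\delta_k = \max_{|S|\le k}\|\Phi_S^T\Phi_S - I\|_2$, with the spectral norm dominating column norms giving $\nu_{k-1}\le\delta_k$, and Gershgorin plus Cauchy--Schwarz giving $\delta_k\le\sqrt{k-1}\,\nu_{k-1}$. The only cosmetic differences are that you prove the two outer inequalities directly from the definition (where the paper invokes monotonicity of $\nu_k$ and $\nu_k/\sqrt{k}$ in $k$) and apply Gershgorin to the hollow matrix $\Phi_S^T\Phi_S - I$ rather than to the Gram matrix itself with center $1$.
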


% Before proceeding to the analysis of greedy algorithms using the new
% coherence index $\nu_k$, we need the following lemmas.

The next lemma is needed to proceed to our main results.

\begin{lem}\label{lem:bounds_noisy}
  Let $\Lambda \subset [d]$ with $|\Lambda| = k$. Let $f
  = \Phi a + w$ with $\supp(a) = \Lambda$ and $\|w\|_2 \le
  \epsilon$. In addition, assume that there exits $\Omega \subseteq
  \Lambda$ with $|\Omega| = m$, such that
  \begin{align*}
    \langle \Phi a, \phi_i \rangle = 0, \mbox{ for $i\in \Lambda \setminus
      \Omega$}.
  \end{align*}
  Then
  \begin{align*}
    \max_{i \in [d]\setminus\Lambda} |\langle f,\phi_i \rangle|
    &\le
    \nu_k \|a\|_2 + \epsilon, \\
    \max_{i \in \Lambda} |\langle f,\phi_i \rangle|
    &\ge
    \frac{\sqrt{1-\delta_k}}{\sqrt{m}}  \|\Phi a\|_2 - \epsilon.
  \end{align*}
\end{lem}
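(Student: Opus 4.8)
The plan is to treat the two inequalities separately, since each reduces to a Cauchy--Schwarz estimate combined with the triangle inequality to peel off the noise term $w$. The column normalization $\|\phi_i\|_2=1$ and the definition of $\nu_k$ handle the upper bound, while the RIP lower bound on $\|\Phi a\|_2$ is what drives the lower bound.

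For the upper bound I would fix an index $i\in[d]\setminus\Lambda$ and split $\langle f,\phi_i\rangle=\langle\Phi a,\phi_i\rangle+\langle w,\phi_i\rangle$. Writing $\Phi a=\sum_{j\in\Lambda}a_j\phi_j$ and applying Cauchy--Schwarz in the $j$-sum gives $|\langle\Phi a,\phi_i\rangle|\le\|a\|_2\bigl(\sum_{j\in\Lambda}\langle\phi_i,\phi_j\rangle^2\bigr)^{1/2}$, where I use that $\|a\|_2=(\sum_{j\in\Lambda}a_j^2)^{1/2}$ since $\supp(a)=\Lambda$. Because $i\notin\Lambda$ and $|\Lambda|=k$, the set $\Lambda$ is an admissible choice in Definition~\ref{def:newcoherence}, so the square-root factor is bounded by $\nu_k$. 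The noise term is controlled by $|\langle w,\phi_i\rangle|\le\|w\|_2\|\phi_i\|_2\le\epsilon$, and taking the maximum over $i$ finishes this half.

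The lower bound is the more delicate step. I would first restrict the maximum over $\Lambda$ to the smaller set $\Omega$, then peel off the noise via $|\langle f,\phi_i\rangle|\ge|\langle\Phi a,\phi_i\rangle|-\epsilon$ for $i\in\Omega$. The key identity is to expand $\|\Phi a\|_2^2=\sum_{j\in\Lambda}a_j\langle\Phi a,\phi_j\rangle$ and invoke the hypothesis $\langle\Phi a,\phi_i\rangle=0$ for $i\in\Lambda\setminus\Omega$ to collapse the sum to $\sum_{j\in\Omega}a_j\langle\Phi a,\phi_j\rangle$. Applying Cauchy--Schwarz over $\Omega$, using $(\sum_{j\in\Omega}a_j^2)^{1/2}\le\|a\|_2$, and bounding the $\ell_2$ norm of the $m$ correlations by $\sqrt{m}$ times their maximum yields $\|\Phi a\|_2^2\le\sqrt{m}\,\|a\|_2\,\max_{i\in\Omega}|\langle\Phi a,\phi_i\rangle|$.

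The final ingredient is to replace $\|a\|_2$ using the RIP lower bound $\|\Phi a\|_2^2\ge(1-\delta_k)\|a\|_2^2$, that is $\|a\|_2\le\|\Phi a\|_2/\sqrt{1-\delta_k}$, which is legitimate since $a$ is $k$-sparse. Substituting gives $\max_{i\in\Omega}|\langle\Phi a,\phi_i\rangle|\ge\sqrt{1-\delta_k}\,\|\Phi a\|_2/\sqrt{m}$, and reinstating the $-\epsilon$ completes the claim. I expect the main obstacle to be recognizing how to exploit the orthogonality hypothesis cleanly — namely, to expand $\|\Phi a\|_2^2$ as above rather than to bound individual correlations directly — so that the factor $\sqrt{m}$ emerges naturally from converting an $\ell_2$ sum over $\Omega$ into its maximum, and the RIP enters only at the very last step.
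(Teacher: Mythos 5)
Your proposal is correct and follows the same overall structure as the paper's proof: split $f = \Phi a + w$, peel off the noise term via $|\langle w,\phi_i\rangle|\le\|w\|_2\|\phi_i\|_2\le\epsilon$, and reduce each inequality to a bound on $|\langle \Phi a,\phi_i\rangle|$. For the first inequality the two arguments coincide (the paper simply states $|\langle\Phi a,\phi_i\rangle|\le\nu_k\|a\|_2$, which is exactly your Cauchy--Schwarz step over $j\in\Lambda$ together with the admissibility of $\Lambda$ in the definition of $\nu_k$).

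Where your write-up goes beyond the paper is the second inequality: the paper's proof asserts the chain
\begin{align*}
\max_{i\in\Lambda}|\langle\Phi a,\phi_i\rangle| \ge \frac{\sqrt{1-\delta_k}}{\sqrt{m}}\|\Phi a\|_2
\end{align*}
without any justification, and this is precisely the crux of the lemma (it is also the only place the hypothesis on $\Omega$ is used). Your argument --- expanding $\|\Phi a\|_2^2=\sum_{j\in\Lambda}a_j\langle\Phi a,\phi_j\rangle$, collapsing the sum to $\Omega$ via the orthogonality hypothesis, applying Cauchy--Schwarz over $\Omega$ to extract the factor $\sqrt{m}\max_{j\in\Omega}|\langle\Phi a,\phi_j\rangle|$, and only then invoking the RIP bound $\|a\|_2\le\|\Phi a\|_2/\sqrt{1-\delta_k}$ --- is exactly the missing derivation, and every step checks out (the division by $\|\Phi a\|_2$ is harmless since the claim is trivial when $\Phi a=0$). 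So your proof is not a different route; it is the paper's route with the key omitted step supplied.
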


% Now we are ready to analyze the performance of some greedy algorithms
% in CS framework using the new coherence index $\nu_k$.

\section{Main Results}
% \subsection{Background}
We first begin with a well known greedy algorithm, the weak orthogonal
matching pursuit (WOMP), which was defined
in~\cite{Temlyakov:00}. Here we present a simple version in
Algorithm~\ref{alg:womp} where the weak parameter $\rho$ is a constant
for each iteration. Notice that when $\rho = 1$, WOMP becomes standard
OMP.

\begin{algorithm}
  \caption{Weak Orthogonal Matching Pursuit (WOMP)}
  \begin{algorithmic}[1]
    \STATE\textbf{Input:} weak parameter $\rho\in (0,1]$, dictionary $\Phi$,
    signal $f$, and the noise level $\epsilon$.
    \STATE\textbf{Initialization:} $r_0 := f$, $x_0:=0$, 
    $\Lambda_0:=\emptyset$, $s:=0$.
    \WHILE {$\|r_s\|_2 > \epsilon$}
      \STATE Find an index $i$ such that $$|\langle r_s,\phi_i\rangle|
      \ge \rho\cdot\max_{\phi}|\langle r_s,\phi\rangle|$$
      where $\phi$ is any column of $\Phi$;
      \STATE Update the support: $$\Lambda_{s+1} = \Lambda_s \cup \{i\};$$
      \STATE Update the estimate: $$x_{s+1} = \argmin_z \|f - \Phi_{\Lambda_{s+1}} z\|_2;$$
      \STATE Update the residual: $$r_{s+1} = f - \Phi_{\Lambda_{s+1}} x_{s+1};$$
      \STATE $s = s+1$;
    \ENDWHILE
    \STATE \textbf{Output:} If the algorithm is stopped after $k$
    iterations, then the output estimate $\hat{a}$ of $a$ is
    $\hat{a}_{\Lambda_k} = x_k$ and $\hat{a}_{\Lambda_k^C} = 0$.
  \end{algorithmic}
  \label{alg:womp}
\end{algorithm}

Let us consider the case where a sparse signal is
contaminated by a perturbation. Specifically, let $\Lambda \subset
[d]$ with $|\Lambda| = k$. We consider a signal $f = \Phi
a + w$, where $a\in\mathbb{R}^d$ with $\supp(a) = \Lambda$ and
$\|w\|_2 \le \epsilon$. % We also assume that $\|\Phi a\|_2 > \epsilon$.

\begin{thm}\label{thm:womp_noisy}
  Denote by $a_{min}$ the nonzero entry of $a$ with the least
  magnitude, and $\hat{a}_{womp}$ the recovered representation of $f$
  in $\Phi$ by WOMP after $k$ iterations. If
  % \begin{align}
  %   \sqrt{k} \nu_k < \rho (1-\sqrt{k-1}\nu_{k-1}) \label{eqn:thm_womp_bound1_noisy}
  % \end{align}
  \begin{align}
    \sqrt{k} \nu_k < \rho (1-\delta_k) \label{eqn:thm_womp_bound1_noisy}
  \end{align}
  and the noise level obeys
  % \begin{align}
  %   |a_{min}| > \frac{(1+\rho) \epsilon}{\rho (1-\sqrt{k-1}\nu_{k-1}) - \sqrt{k}
  %     \nu_k}, \label{eqn:thm_womp_bound2_noisy}
  % \end{align}
  \begin{align}
    \epsilon < \frac{\rho (1-\delta_k) - \sqrt{k}
      \nu_k}{1+\rho} |a_{min}|, \label{eqn:thm_womp_bound2_noisy}
    % |a_{min}| > \frac{(1+\rho)\epsilon}{\rho (1-\delta_k) - \sqrt{k}
    %   \nu_k}, \label{eqn:thm_womp_bound2_noisy}
  \end{align}
  then
  \begin{enumerate}[a)]
        \item $\hat{a}_{womp}$ has the correct sparsity pattern
            \begin{equation*}
                \supp(\hat{a}_{womp}) = \supp(a);
            \end{equation*}
         \item $\hat{a}_{womp}$ approximates the ideal noiseless representation
             \begin{equation}
                 \|\hat{a}_{womp} - a\|_2^2 \leq
                 \frac{\epsilon^2}{1-\delta_k}. \label{eqn:thm_womp_errorbound_noisy}
             \end{equation}
    \end{enumerate}
\end{thm}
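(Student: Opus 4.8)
The plan is to prove part (a) by an induction showing that WOMP selects a new, correct index at each of the first $k$ iterations, and then to read off part (b) from a single least-squares identity. The core reduction is standard: if at step $s$ the current support satisfies $\Lambda_s \subseteq \Lambda$, then any index $i$ obeying the weak selection rule $|\langle r_s,\phi_i\rangle| \ge \rho\max_\phi|\langle r_s,\phi\rangle|$ must lie in $\Lambda$ as soon as
\[
\max_{j\in[d]\setminus\Lambda}|\langle r_s,\phi_j\rangle| \;<\; \rho\,\max_{i\in\Lambda}|\langle r_s,\phi_i\rangle|,
\]
since otherwise the selected index would simultaneously exceed and fall below $\rho\max_{i\in\Lambda}|\langle r_s,\phi_i\rangle|$. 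Because the least-squares residual is orthogonal to the chosen columns ($\langle r_s,\phi_i\rangle = 0$ for $i\in\Lambda_s$), the inner maximum is attained on $\Lambda\setminus\Lambda_s$ and the selected index is automatically distinct from the previous ones; hence $|\Lambda_{s+1}| = s+1$ and the induction advances with base case $\Lambda_0=\emptyset$.

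The key step is to put $r_s$ into the \emph{exact} form required by Lemma~\ref{lem:bounds_noisy}. I would write $r_s = (I-P_{\Lambda_s})f = g + w_s$, where $P_{\Lambda_s}$ is the orthogonal projector onto the span of $\Phi_{\Lambda_s}$, $g := (I-P_{\Lambda_s})\Phi a$, and $w_s := (I-P_{\Lambda_s})w$. Two features make this decomposition fit the lemma: first, $\|w_s\|_2 \le \|w\|_2 \le \epsilon$ since projections are nonexpansive; second, $g = (I-P_{\Lambda_s})\Phi_{\Lambda\setminus\Lambda_s}a_{\Lambda\setminus\Lambda_s}$ can be written as $\Phi c$ with $\supp(c)\subseteq\Lambda$ and $c$ agreeing with $a$ on $\Lambda\setminus\Lambda_s$, while by construction $\langle g,\phi_i\rangle = 0$ for $i\in\Lambda_s$. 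Thus the hypotheses of Lemma~\ref{lem:bounds_noisy} hold for $r_s$ with the role of $a$ played by $c$, of $\Omega$ by $\Lambda\setminus\Lambda_s$, and $m=k-s$, yielding
\[
\max_{j\notin\Lambda}|\langle r_s,\phi_j\rangle|\le \nu_k\|c\|_2+\epsilon,\qquad \max_{i\in\Lambda}|\langle r_s,\phi_i\rangle|\ge \tfrac{\sqrt{1-\delta_k}}{\sqrt{k-s}}\,\|\Phi c\|_2-\epsilon.
\]

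To close the selection inequality I would eliminate $\|c\|_2$ and $\|\Phi c\|_2$ in favour of the surviving signal energy. RIP gives $\|c\|_2 \le \|\Phi c\|_2/\sqrt{1-\delta_k}$, and since $c$ equals $a$ on $\Lambda\setminus\Lambda_s$ it also gives $\|\Phi c\|_2 \ge \sqrt{1-\delta_k}\,\|a_{\Lambda\setminus\Lambda_s}\|_2 \ge \sqrt{1-\delta_k}\,\sqrt{k-s}\,|a_{min}|$, the last step using that each surviving coefficient has magnitude at least $|a_{min}|$. Substituting and rearranging reduces the desired inequality to $|a_{min}|\bigl(\rho(1-\delta_k)-\sqrt{k-s}\,\nu_k\bigr) > (1+\rho)\epsilon$; monotonicity in $s$ (worst case $s=0$, where $\sqrt{k-s}=\sqrt{k}$) shows this is implied by the two hypotheses \eqref{eqn:thm_womp_bound1_noisy} and \eqref{eqn:thm_womp_bound2_noisy}. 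A short side computation with the same bounds shows $\|r_s\|_2 > \epsilon$ whenever $\Lambda_s\subsetneq\Lambda$, so the \textbf{while} loop cannot terminate before all $k$ correct indices are found; this completes the induction and gives $\Lambda_k=\Lambda$.

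Part (b) is then immediate: with $\Lambda_k=\Lambda$ and $x_k$ the least-squares coefficients, $\Phi_\Lambda(x_k-a_\Lambda)=P_\Lambda f-\Phi a=P_\Lambda w$, so by RIP $\|\hat a_{womp}-a\|_2^2\le\|P_\Lambda w\|_2^2/(1-\delta_k)\le\epsilon^2/(1-\delta_k)$, which is \eqref{eqn:thm_womp_errorbound_noisy}. Feeding this back shows $\|\hat a_{womp}-a\|_2<|a_{min}|$, so no entry on $\Lambda$ can vanish and $\supp(\hat a_{womp})=\supp(a)$ exactly, finishing (a). I expect the main obstacle to be precisely the decomposition step: one must resist the naive choice $b=a-\tilde x_s$, which fails the lemma's exact orthogonality because the residual is orthogonal to $\Phi_{\Lambda_s}$ with the noise included, and instead use the projected signal $g=(I-P_{\Lambda_s})\Phi a$, verifying that it both lies in the column span of $\Phi_\Lambda$ and retains the true coefficients on $\Lambda\setminus\Lambda_s$.
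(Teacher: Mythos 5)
Your proof is correct and follows essentially the same route as the paper's: the same projected-residual decomposition $r_s = (I-P_{\Lambda_s})\Phi a + (I-P_{\Lambda_s})w$, the same application of Lemma~\ref{lem:bounds_noisy} with $\Omega = \Lambda\setminus\Lambda_s$ and $m = k-s$, and the same reduction of the selection inequality to the worst case $s=0$ via $\|c\|_2 \ge \sqrt{k-s}\,|a_{min}|$. If anything, you are more careful than the paper, which invokes \cite{DET} for the error bound where you give the least-squares identity $\Phi_\Lambda(x_k-a_\Lambda)=P_\Lambda w$ directly, and which does not explicitly address early termination of the \textbf{while} loop or the possibility that a least-squares coefficient on $\Lambda$ vanishes.
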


% Notice that the bound given in~\eqref{eqn:thm_womp_bound1_noisy} is the best result in the
% literature.

% % \begin{rmk}
%   Theorem~\ref{thm:womp_noisy} is a generalization of
%   Theorem~\ref{womp}. In fact, Theorem~\ref{womp} follows immediately from
%   Theorem~\ref{thm:womp_noisy} by applying Lemma~\ref{lem:relationship}.
% % \end{rmk}

From Lemma~\ref{lem:relationship}, it follows that
\begin{cor} \label{cor:womp}
  Let $f = \Phi a$ with $\|a\|_0 = k$. If one of the following
  conditions is satisfied,
  \begin{enumerate}[a)]
    \item $\sqrt{k} \delta_{k+1} < \rho
      (1-\delta_{k})$,\label{eqn:cor_womp_condition_delta}
    \item $\sqrt{k} \nu_k < \rho (1-\nu_{k-1} \sqrt{k-1})$, 
    \item $kM < \rho(1-(k-1)M)$,
  \end{enumerate}
  then, $a$ is the unique sparsest representation of $f$ and moreover,
  WOMP recovers $a$ exactly in $k$ iterations.
\end{cor}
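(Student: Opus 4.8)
The plan is to recognize that Corollary~\ref{cor:womp} is essentially the specialization of Theorem~\ref{thm:womp_noisy} to the noiseless case $\epsilon = 0$, so that the whole task reduces to showing each of the three conditions a), b), c) implies the single hypothesis~\eqref{eqn:thm_womp_bound1_noisy}, namely $\sqrt{k}\nu_k < \rho(1-\delta_k)$. First I would observe that with $f = \Phi a$ exactly we may take $\epsilon = 0$; the noise bound~\eqref{eqn:thm_womp_bound2_noisy} then reads $0 < \frac{\rho(1-\delta_k)-\sqrt{k}\nu_k}{1+\rho}|a_{min}|$, which holds automatically once~\eqref{eqn:thm_womp_bound1_noisy} holds, because $a$ is genuinely $k$-sparse so $|a_{min}| > 0$. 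Moreover the error bound~\eqref{eqn:thm_womp_errorbound_noisy} collapses to $\|\hat{a}_{womp} - a\|_2 = 0$, upgrading correct support recovery to exact recovery.

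The core of the proof is then three short algebraic reductions, each driven by the inequality chain in Lemma~\ref{lem:relationship}, used both at order $k$ and at order $k+1$ (the latter obtained by replacing $k$ with $k+1$, giving $\nu_k \le \delta_{k+1} \le \sqrt{k}\nu_k \le kM$). For condition a), since $\nu_k \le \delta_{k+1}$ I get $\sqrt{k}\nu_k \le \sqrt{k}\delta_{k+1} < \rho(1-\delta_k)$. For condition b), since $\delta_k \le \sqrt{k-1}\nu_{k-1}$ I have $\rho(1-\sqrt{k-1}\nu_{k-1}) \le \rho(1-\delta_k)$, whence $\sqrt{k}\nu_k < \rho(1-\sqrt{k-1}\nu_{k-1}) \le \rho(1-\delta_k)$. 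For condition c), combining $\sqrt{k}\nu_k \le kM$ and $\delta_k \le (k-1)M$ yields $\sqrt{k}\nu_k \le kM < \rho(1-(k-1)M) \le \rho(1-\delta_k)$. In each case I land exactly on~\eqref{eqn:thm_womp_bound1_noisy}, so Theorem~\ref{thm:womp_noisy} applies and WOMP recovers $a$ exactly in $k$ iterations.

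For the uniqueness claim I would argue that each hypothesis is a property of $\Phi$ and $\rho$ alone, so Theorem~\ref{thm:womp_noisy} (with $\epsilon = 0$) guarantees exact recovery of \emph{every} $k$-sparse generator of a given measurement vector. Hence if $a$ and $a'$ were two distinct $k$-sparse representations of the same $f$, any single admissible WOMP run on $f$ would be forced to output both, a contradiction; this pins down $a$ as the unique $k$-sparse representation, and a spark-type argument (noting in particular that condition c) gives the classical bound $(2k-1)M < 1$ when $\rho \le 1$) rules out any strictly sparser one. I expect the reductions themselves to be entirely routine; the only places demanding care are the bookkeeping of the index shift in Lemma~\ref{lem:relationship}—making sure the order-$(k+1)$ instance supplies $\nu_k \le \delta_{k+1}$ and $\sqrt{k}\nu_k \le kM$ while the order-$k$ instance supplies $\delta_k \le \sqrt{k-1}\nu_{k-1} \le (k-1)M$—and cleanly justifying the uniqueness statement, which is not literally an output of the theorem.
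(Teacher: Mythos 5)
Your proof is correct and follows essentially the same route the paper intends: the corollary is the $\epsilon = 0$ specialization of Theorem~\ref{thm:womp_noisy}, with each of the three conditions reduced to $\sqrt{k}\nu_k < \rho(1-\delta_k)$ via Lemma~\ref{lem:relationship} applied at orders $k$ and $k+1$, exactly as you do. Your additional care over the uniqueness claim (the run-of-WOMP argument plus the spark-type observation) actually goes beyond what the paper spells out, since the paper states the corollary with no explicit proof at all.
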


% Note that the last bound in Corollary~\ref{cor:womp} coincides with
% the bound in~\cite{DET2}.

The performance of WOMP decreases as $\rho$ decreases. Now if we set
$\rho = 1$ in WOMP, then we obtain immediately the following corollary for OMP.

\begin{cor}\label{cor:omp_delta}
  Let $f = \Phi \alpha$ with $\|a\|_0 = k$. If
  \begin{align}
    \delta_k + \sqrt{k} \delta_{k+1} < 1, \label{eqn:cor_omp_condition_delta}
  \end{align}
  then, $a$ is the unique sparsest representation of $f$ and moreover,
  OMP recovers $a$ exactly in $k$ iterations.
\end{cor}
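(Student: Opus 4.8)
The plan is to obtain this corollary directly from Corollary~\ref{cor:womp} by specializing the weak parameter to $\rho = 1$. The first thing I would observe is that WOMP with $\rho = 1$ coincides with standard OMP: in Step~4 of Algorithm~\ref{alg:womp} the selection criterion $|\langle r_s,\phi_i\rangle| \ge \rho\cdot\max_\phi|\langle r_s,\phi\rangle|$ reduces to $|\langle r_s,\phi_i\rangle| \ge \max_\phi|\langle r_s,\phi\rangle|$ when $\rho = 1$, which forces $\phi_i$ to be a column of maximal correlation with the current residual, i.e. precisely the OMP greedy step. Hence any recovery guarantee for WOMP at $\rho = 1$ is a guarantee for OMP.

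Next I would substitute $\rho = 1$ into condition~a) of Corollary~\ref{cor:womp}, namely $\sqrt{k}\,\delta_{k+1} < \rho(1-\delta_k)$. This yields $\sqrt{k}\,\delta_{k+1} < 1-\delta_k$, which rearranges to $\delta_k + \sqrt{k}\,\delta_{k+1} < 1$, i.e. exactly the hypothesis \eqref{eqn:cor_omp_condition_delta}. Thus the assumption of the present corollary is nothing but condition~a) of Corollary~\ref{cor:womp} evaluated at $\rho = 1$. Invoking Corollary~\ref{cor:womp} under this condition then delivers both conclusions at once: $a$ is the unique sparsest representation of $f$, and WOMP---equivalently OMP, by the observation above---recovers $a$ exactly in $k$ iterations.

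The only thing worth checking carefully is the bookkeeping along the chain Theorem~\ref{thm:womp_noisy} $\to$ Corollary~\ref{cor:womp} $\to$ present corollary. In particular I would confirm that the noiseless hypothesis $f = \Phi a$ (so $\epsilon = 0$) automatically satisfies the noise-level condition \eqref{eqn:thm_womp_bound2_noisy}, whose right-hand side is strictly positive whenever \eqref{eqn:thm_womp_bound1_noisy} holds; together with Lemma~\ref{lem:relationship} (which supplies $\sqrt{k}\,\nu_k \le \sqrt{k}\,\delta_{k+1}$, so that \eqref{eqn:thm_womp_bound1_noisy} is implied by condition~a) of Corollary~\ref{cor:womp}), this closes the argument.

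There is no genuine analytic obstacle here: the substantive work has already been carried out in Theorem~\ref{thm:womp_noisy} and Lemma~\ref{lem:relationship}, and this corollary is simply the cleanest specialization, highlighting the improved RIP bound $\delta_k + \sqrt{k}\,\delta_{k+1} < 1$ for OMP.
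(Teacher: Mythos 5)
Your proof is correct and follows exactly the paper's own route: the paper obtains this corollary by setting $\rho = 1$ in Corollary~\ref{cor:womp}, condition~a), which is precisely your specialization. Your extra bookkeeping (that $\epsilon = 0$ trivially satisfies \eqref{eqn:thm_womp_bound2_noisy}, and that $\nu_k \le \delta_{k+1}$ from Lemma~\ref{lem:relationship} links condition~a) back to \eqref{eqn:thm_womp_bound1_noisy}) is a sound and slightly more explicit account of what the paper leaves implicit.
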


\begin{rmk}
  The condition in~\eqref{eqn:cor_omp_condition_delta} gives an improved bound on the
restricted isometry constant compared to the bound obtained in
\cite{MoShen:12} for OMP for successful recovery
after $k$ iterations, where the bound was
$\delta_{k+1} < \frac{1}{\sqrt{k} + 1}$.
\end{rmk}

\section{Conclusion}
In this paper, we have introduced a new generalized coherence index,
the \emph{global 2-coherence}, and established two
connections among the mutual coherence, the global 2-coherence, and the
restricted isometry constant. Based on these relations,
we analyzed the performance of WOMP as well as OMP for their recovery
ability of sparse representations in both ideal noiseless and noisy
cases. In particular, for the noiseless case, we showed an improved
bound over the best known results on the restricted isometry constant
for successful recovery using OMP.

% if have a single appendix:
%\appendix[Proof of the Zonklar Equations]
% or
%\appendix  % for no appendix heading
% do not use \section anymore after \appendix, only \section*
% is possibly needed

% use appendices with more than one appendix
% then use \section to start each appendix
% you must declare a \section before using any
% \subsection or using \label (\appendices by itself
% starts a section numbered zero.)
%

% \appendices
% \section{Proof of the First Zonklar Equation}
% Appendix one text goes here.

% % you can choose not to have a title for an appendix
% % if you want by leaving the argument blank
% \section{}
% Appendix two text goes here.

\appendix

\begin{proof}[Proof of Lemma~\ref{lem:relationship}]
	It is easy to show that $\nu_k$ increases with $k$ while
        $\frac{\nu_k}{\sqrt{k}}$ decreases with $k$. Therefore, the first and
        the last relations follow immediately. 
	
	We now prove the second inequality. 
	\begin{align*}
          \nu_{k-1}(\Phi)
          &=
          \max_{i\in[d]} \max_{\substack{\Lambda\subseteq[d]\backslash\{i\}\\|\Lambda| \le k-1}} \left( \sum_{j\in\Lambda} \langle \phi_i, \phi_j \rangle^2 \right)^{\frac{1}{2}} \\
          &=
          \max_{\substack{\Lambda\subseteq[d] \\ |\Lambda| \le k}} \max_{i\in\Lambda} \left( \sum_{j\in\Lambda\backslash\{i\}} \langle \phi_i,\phi_j \rangle^2 \right)^{\frac{1}{2}} \\
          &=
          \max_{\substack{\Lambda\subseteq[d] \\ |\Lambda| \le k}} \|
          \Phi_\Lambda^T \Phi_\Lambda - I \|_{\infty, 2},
        \end{align*}
        where $\Phi_{\Lambda} \in \mathbb{R}^{n\times |\Lambda|}$ is a
        submatrix of $\Phi$ with columns indexed in $\Lambda$.
              
	On the other hand, according to Proposition~2.5 in \cite{Rauhut:10}, one has
	\begin{align*}
		\delta_k &=  \max_{\substack{\Lambda\subseteq[d] \\ |\Lambda| \le k}} \| \Phi_\Lambda^T \Phi_\Lambda - I \|_{2, 2} \\
		&\ge \max_{\substack{\Lambda\subseteq[d] \\ |\Lambda| \le k}} \| \Phi_\Lambda^T \Phi_\Lambda - I \|_{\infty, 2} \\
		&= \nu_{k-1}(\Phi),
	\end{align*}
	which completes the proof for the second inequality.
	
	Next we prove the third inequality. Consider the Gram matrix $G =\Phi_\Lambda^T \Phi_\Lambda$, where its entries $g_{ij} = \langle \phi_i,\phi_j \rangle$. Clearly its diagonal entries $g_{ii} = 1$. Then by the Gershgorin Circle Theorem, each eigenvalue $\lambda$ of $G$ is in at least one of the disks $\{z: | z-1 | \le R_i \}$, where $R_i = \sum_{\substack{j\in\Lambda \\ j\ne i}} |g_{ij}|$. Equivalently, we have
	\begin{align*}
		1-R_i \le \lambda \le 1+R_i
	\end{align*}
for some $i$.
	Therefore,
	\begin{align*}
		\delta_k &\le \max_i R_i 
                             = \max_i \sum_{\substack{j\in\Lambda \\ j\ne i}}
                  |g_{ij}| \\
                             &\le \max_i \sqrt{k-1}
                             \Big(\sum_{\substack{j\in\Lambda \\ j\ne i}}
                          |g_{ij}|^2\Big)^{\frac{1}{2}} \\
                             &\le \sqrt{k-1}\nu_{k-1}.
	\end{align*}
\end{proof}

\begin{proof}[Proof of Lemma~\ref{lem:bounds_noisy}]
  For $i \in [d]\setminus\Lambda$, we have
  \begin{align*}
    |\langle f,\phi_i \rangle|
    &=
    | \langle \Phi a + w, \phi_i \rangle| \\
    &\le
   |\langle \Phi a,\phi_i \rangle| + |\langle w,\phi_i \rangle|\\
   &\le
   \nu_k \|a\|_2 + \|w\|_2 \|\phi_i\|_2 \\
   &\le
   \nu_k \|a\|_2 + \epsilon.
 \end{align*}
	Taking maximum on both sides completes the proof of the first inequality.

	Now for $i \in \Lambda$,
	\begin{align*}
          \max_{i \in \Lambda} |\langle f,\phi_i \rangle|
          &=
          \max_{i \in \Lambda} |\langle \Phi a + w,\phi_i \rangle|  \\
          &\ge
          \max_{i \in \Lambda} |\langle \Phi a , \phi_i \rangle| -
          \max_{i \in \Lambda} |\langle w, \phi_i \rangle| \\
          &\ge
          \frac{\sqrt{1-\delta_k}}{\sqrt{m}} \|\Phi a\|_2 - \max_{i
            \in \Lambda} \|w\|_2 \|\phi_i\|_2 \\
          &\ge
          \frac{\sqrt{1-\delta_k}}{\sqrt{m}} \|\Phi a\|_2 - \epsilon.
        \end{align*}
        This completes the proof of the second inequality.
\end{proof}

\begin{proof}[Proof of Theorem~\ref{thm:womp_noisy}]
  First, we show that WOMP recovers the correct support of $a$.

  We start with the first iteration. Note that $r_0=f$. We need to show
  \begin{align}
    \max_{i \in [d]\setminus\Lambda} |\langle f, \phi_i \rangle| < \rho \max_{i
    \in \Lambda} |\langle f, \phi_i \rangle|. \label{eqn:proof_thm_womp_noisy_keyrelation}
  \end{align}

  By Lemma~\ref{lem:bounds_noisy}, we have
  \begin{align}
    \max_{i \in [d]\setminus\Lambda} |\langle f, \phi_i \rangle|
    &\le
    \nu_k \|a\|_2 + \epsilon, \label{eqn:proof_thm_womp_noisy_upperbound}
  \end{align}
  and
  \begin{align}
    \max_{i \in \Lambda} |\langle f,\phi_i \rangle|
    &\ge
    \frac{1-\delta_k}{\sqrt{k}} \|\Phi a\|_2 - \epsilon \notag \\
    &\ge
    \frac{1-\delta_k}{\sqrt{k}} \|a\|_2 - \epsilon. \label{eqn:proof_thm_womp_noisy_lowerbound}
  \end{align}

  Now since $\|a\|_2 \ge \sqrt{k} |a_{min}|$, by imposing conditions \eqref{eqn:thm_womp_bound1_noisy} and
  \eqref{eqn:thm_womp_bound2_noisy}, we get
  \begin{align*}
    \nu_k \|a\|_2 + \epsilon < \rho \left(
    \frac{1-\delta_k}{\sqrt{k}} \|a\|_2 - \epsilon \right),
  \end{align*}
  and relation~\eqref{eqn:proof_thm_womp_noisy_keyrelation} follows
  from the two bounds
  \eqref{eqn:proof_thm_womp_noisy_upperbound} and
  \eqref{eqn:proof_thm_womp_noisy_lowerbound}.
  Hence, WOMP only selects one atom from $\{\phi_i\}_{i\in\Lambda}$ in
  the first iteration.

  Now we argue that by repeatedly applying the above procedure, we are
  able to correctly recover the support of $a$. In fact, we have for
  the $s$-th iteration
  \begin{align*}
    r_s
    &=
    f - P_{\Lambda_s}(f) \\
    &=
    \Phi a + w -
    \left( P_{\Lambda_s}(\Phi a) +
      P_{\Lambda_s}(w) \right) \\
    &=
    (I - P_{\Lambda_s}) \Phi a + (I -
    P_{\Lambda_s}) w \\
    &=
    \Phi a_s + w_s
  \end{align*}
  where
  $$ \Phi a_s =  (I -
  P_{\Lambda_s}) \Phi a$$
  and
  $$w_s = (I -P_{\Lambda_s}) w.$$
  Therefore, $\langle \Phi a_s, \phi_i \rangle = 0$ for
  $i\in\Lambda_s$. Note that $(k-s)$ components of $a_s$ are the same as
  that of $a$. Then the result follows from the inequality
  $\|a_s\|_2 \ge \sqrt{k-s}|a_{min}|$ and
  Lemma~\ref{lem:bounds_noisy} for $s$-th iteration. In addition, the
  orthogonal projection step guarantees that the procedure will not
  repeat the atoms already chosen in previous iterations. Therefore,
  the correct support of the noiseless representation $a$ can be
  recovered exactly after $k$ iterations.

   Next, by following the idea of the proof of Theorem 5.1 in
   \cite{DET} and using the relation $\sigma_{min} \ge 1 - \delta_k$,
   where $\sigma_{min}$ denotes the smallest singular value of $\Phi$,
   we are able to prove the error bound~\eqref{eqn:thm_womp_errorbound_noisy}. 

  % Next, we prove the error
  % bound~\eqref{eqn:thm_womp_errorbound_noisy}. The proof follows the
  % idea of the proof of Theorem 5.1 in \cite{DET}. Let $a_T$ denote $a$ restricted
  % to its support. Similarly, let $\Phi_T$ denote the dictionary $\Phi$
  % restricted to the support of $a$. The orthogonal projection step
  % tells that WOMP solves for
  % \begin{align*}
  %   \hat{a}_T = \argmin_{a_T} \| f - \Phi_T a_T\|_2 = \Phi_T^\dagger f
  % \end{align*}
  % where $\Phi_T^\dagger$ denotes the Moore-Penrose generalized inverse
  % of $\Phi_T$. Then we have
  % \begin{align*}
  %   \hat{a}_T
  %   =
  %   \Phi_T^\dagger f 
  %   =
  %   \Phi_T^\dagger (\Phi a + w) 
  %   =
  %   \Phi_T^\dagger (\Phi_T a_T + w) 
  %   =
  %   a_T + \Phi_T^\dagger w.
  % \end{align*}
  % The term $\Phi_T^\dagger w$ denotes the reconstruction error. It can
  % be bounded by
  % \begin{align*}
  %   \|\hat{a}_{womp} - a\|_2
  %   &=
  %   \|\hat{a}_T - a_T\|_2 \\
  %   &=
  %   \|\Phi_T^\dagger w\|_2 \\
  %   &\le
  %   \|\Phi_T^\dagger\|_2 \cdot \|w\|_2 \\
  %   &\le
  %   \epsilon/\sigma_{min}
  % \end{align*}
  % where we bound the norm of $\Phi_T^\dagger$ by the smallest sigular
  % value $\sigma_{min}$ of $\Phi$. Now by RIP, we have $\sigma_{min}^2
  % \ge 1 - \delta_k$, and the error
  % bound~\eqref{eqn:thm_womp_errorbound_noisy} follows.
\end{proof}

% use section* for acknowledgement
% \section*{Acknowledgment}

% The authors would like to thank...

% Can use something like this to put references on a page
% by themselves when using endfloat and the captionsoff option.
\ifCLASSOPTIONcaptionsoff
  \newpage
\fi

% trigger a \newpage just before the given reference
% number - used to balance the columns on the last page
% adjust value as needed - may need to be readjusted if
% the document is modified later
%\IEEEtriggeratref{8}
% The "triggered" command can be changed if desired:
%\IEEEtriggercmd{\enlargethispage{-5in}}

% references section

% can use a bibliography generated by BibTeX as a .bbl file
% BibTeX documentation can be easily obtained at:
% http://www.ctan.org/tex-archive/biblio/bibtex/contrib/doc/
% The IEEEtran BibTeX style support page is at:
% http://www.michaelshell.org/tex/ieeetran/bibtex/
\bibliographystyle{IEEEtran}
% argument is your BibTeX string definitions and bibliography database(s)
\bibliography{ga}
\end{document}